\documentclass[sigconf]{acmart}

\usepackage{tikz}
\usetikzlibrary{arrows.meta,positioning,calc}
\usepackage{mathtools}
\usepackage{cleveref}

\theoremstyle{acmplain}
\newtheorem{prop}{Proposition}[section]
\crefname{prop}{Proposition}{Propositions}

\newcommand{\R}{\mathbb{R}}
\newcommand{\I}{\mathcal{I}}

\newcommand{\Static}{\textsc{Static}}
\newcommand{\PlanOne}{\textsc{Plan-1}}
\newcommand{\PlanK}{\textsc{Plan-}K}
\newcommand{\vkip}{VK\_IP}
\newcommand{\vkup}{VK\_UP}

\acmConference[KDD CJ Workshop '26]{5th Workshop on End-to-End Customer Journey Optimization at KDD 2026}{August 9, 2026}{Jeju Island, Republic of Korea}
\acmBooktitle{5th Workshop on End-to-End Customer Journey Optimization at KDD 2026, August 09, 2026, Jeju Island, Republic of Korea}
\acmYear{2026}
\acmISBN{}
\acmDOI{}

\title[Planning over Matrix-Factorization MDPs for Candidate Generation]{Planning over Matrix-Factorization MDPs\\
for Candidate Generation}

\author{Mikhail Trapeznikov}
\affiliation{
  \institution{AI VK}
  \city{Moscow}
  \country{Russia}}
\affiliation{
  \institution{Lomonosov Moscow State University}
  \city{Moscow}
  \country{Russia}}
\email{trapeznikovmy@my.msu.ru}

\author{Maksim Utushkin}
\affiliation{
  \institution{AI VK}
  \city{Moscow}
  \country{Russia}}
\email{mak.utushkin@gmail.com}

\begin{document}

\begin{abstract}
For a recommender service, we view the customer journey as a chain of item recommendations: a useful item changes the user's state and therefore what should be retrieved next. Standard matrix-factorization retrieval ignores this---it builds one user vector and returns the top-$K$ items by a static score, treating them as independent. We ask a narrow question: when is it worth planning over the user-state dynamics that fold-in induces? To answer it we propose casting top-$K$ retrieval as an MDP over the implicit-ALS posterior $(A^{-1},u)$, where an action is an item and the transition is a closed-form rank-one fold-in, and the trajectory reward combines a relevance similarity with a posterior-alignment term. Under the same fixed embeddings we compare static retrieval, one-step planning, and horizon-$K$ MCTS across five datasets and two protocols: a per-user leave-last-$n$ split and a stricter global time split. Dynamics-aware planning tends to overcome static retrieval on all datasets under leave-last-$n$, and the gains hold on MovieLens-1M and the VK-LSVD slices under the global time split. A single step of lookahead already captures most of the gain, so the lightweight planning layer turns static top-$K$ scoring into a short decision and improves retrieval over fixed collaborative-filtering embeddings, with no retraining and no change to the representation. These gains depend on measuring relevance with cosine rather than inner-product similarity, which is otherwise entangled with item popularity.
\end{abstract}

\begin{CCSXML}
<ccs2012>
   <concept>
       <concept_id>10002951.10003317.10003347.10003350</concept_id>
       <concept_desc>Information systems~Recommender systems</concept_desc>
       <concept_significance>500</concept_significance>
       </concept>
   <concept>
       <concept_id>10010147.10010257.10010258.10010261</concept_id>
       <concept_desc>Computing methodologies~Reinforcement learning</concept_desc>
       <concept_significance>300</concept_significance>
       </concept>
   <concept>
       <concept_id>10010147.10010257.10010293.10010294</concept_id>
       <concept_desc>Computing methodologies~Neural networks</concept_desc>
       <concept_significance>100</concept_significance>
       </concept>
 </ccs2012>
\end{CCSXML}

\ccsdesc[500]{Information systems~Recommender systems}
\ccsdesc[300]{Computing methodologies~Reinforcement learning}
\ccsdesc[100]{Computing methodologies~Neural networks}

\keywords{recommender systems, candidate generation, matrix factorization, reinforcement learning, planning, Monte-Carlo tree search}

\maketitle

\section{Introduction}
Recommender systems help users navigate large catalogs of products, videos, music and
other items, and so they shape the customer experience on most digital platforms. A
single recommendation is rarely an individual event: a good pick moves the user further
along their journey, a poor one ends the session or pushes them into an irrelevant part
of the catalog. So a recommendation always comes with two effects -- the immediate
choice, and the way it changes what the user will value next.

Modern systems take this seriously by modelling the user history as a sequence of
interactions and feeding it through learned sequence encoders -- RNNs and, more
recently, transformer-based architectures \citep{hidasi2016session,kang2018sasrec,sun2019bert4rec}.
The underlying view is an MDP \citep{shani2005mdp,ie2019slateq}: the state describes the user's current
context, the action is the next recommended item (or slate), the reward is the user's
feedback or a surrogate quality metric, and the transition encodes how the state evolves
after the interaction. Our setting is deliberately different. We do not learn a sequence
encoder and we do not compare against one: a static MF user vector carries no order
information, so a transformer recommender such as BERT4Rec is solving a different problem
on a different input, and dropping it in as a baseline would confound the representation
with the planning layer we want to isolate. We keep the representation fixed and ask only
what a thin decision layer adds on top of it.

In parallel, matrix factorization \citep{koren2009mf,hu2008implicit} did not
go away. Implicit-feedback ALS still anchors many production retrieval stacks: it is
simple, trains and updates cheaply, yields compact user/item embeddings, and integrates
with ANN indices for milliseconds candidate retrieval at catalog scale
\cite{rendle2022revisiting}. The natural question is therefore but whether one can add a small amount of decision-making
\emph{on top of} MF without giving up the embedding-retrieval approach.

The gap we target is concrete. Standard MF retrieval is static: it builds a single user
vector $u$ and returns the top-$K$ items by $\langle u,v_i\rangle$, treating those $K$
items \emph{independently}. But a top-$K$ list \emph{is} a sequence -- if the first item
is consumed, the user's iALS posterior changes, and the second item ought to be picked
relative to the updated state. Static retrieval cannot distinguish between ``$K$
independently good items'' and ``a good ordered trajectory of $K$ items''.

Contextual bandits such as LinUCB also add a decision layer over fixed
embeddings, maintaining a per-user ridge posterior $(A^{-1},u)$ and selecting
items by an upper-confidence index \citep{li2010contextual}. They treat the user as
a \emph{stationary} target, so the optimal policy is one-step and therefore look-ahead only
serves exploration. We instead read the fold-in as a state transition that
changes what to show next, and plan over it for an arbitrary horizon. This makes
retrieval order-dependent: showing $a$ then $b$ yields a different
trajectory reward than $b$ then $a$, since each item is scored against the state
its predecessor produced (Figure~\ref{fig:example}).

\textbf{Contributions.}
(i)~We propose an MDP built directly on top of implicit ALS, in which the state is the
iALS posterior $[A^{-1},u]$, the action is an item, and the transition is a closed-form
Sherman--Morrison fold-in.
(ii)~We pair it with a trajectory reward that combines a relevance similarity with an
\emph{alignment} term grounded in a closed-form characterisation of the posterior shift.
(iii)~We use this MDP to span a depth-spectrum of retrievers --
\textsc{static} (no look-ahead), \textsc{Plan-1} (one-step dynamics-aware re-ranking)
and \textsc{Plan-}$K$ (MCTS, horizon $K$) -- all sharing the \emph{same} fixed item
embeddings, so the comparison isolates the incremental effect of planning from any
representation-learning difference.

\section{Matrix-Factorization MDP}
\label{sec:mfmdp}

\subsection{Implicit-ALS posterior as state}
Let $\I$ be the item catalog and let each item have a fixed embedding
$v_i\in\R^d$ learned by implicit ALS. Following the implicit-feedback
convention, each observed interaction $r_i$ of the current user is split into a
binary preference $p_i=1[r_i>0]$ and a confidence $c_i$ that grows with
$r_i$ (e.g.\ $c_i=1+\alpha r_i$); the graded signal enters through $c_i$, not
through the target. For a user history $H\subset\I$ the fold-in solves
\begin{equation}
  u(H)=\arg\min_{u\in\R^d}\sum_{i\in \I}c_i\,(p_i-u^\top v_i)^2+\lambda\lVert u\rVert_2^2,
  \label{eq:u_H}
\end{equation}
and, since $p_i=1$ on $H$ and $p_i=0$ otherwise,
\begin{equation}
  A(H)=\lambda I+\sum_{i\in \I} c_i\, v_i v_i^\top,\quad u(H)=A(H)^{-1} \left( \sum_{i\in H}c_i\, v_i \right).
  \label{eq:ials}
\end{equation}

We use
\begin{equation}
  s(H)=\big[P(H),u(H)\big],
  \label{eq:s_H_P_H}
\end{equation}
as the MDP state, where $P(H)=A(H)^{-1}$. Keeping $P$ as part of the state is important: it contains the local posterior geometry of the user's history and lets us update a hypothetical interaction without re-solving the least-squares problem.

The available actions are unconsumed items $a\in\I\setminus H$. In the planner we use an optimistic transition: after choosing $a$, the item is treated as accepted and added to the positive history. This is not a claim that every served item will be accepted by a real user. It is a model-based approximation used to construct an ordered candidate trajectory.

\subsection{Closed-form fold-in transition}
For a current state $(P,u)$ and action $a$, write $v=v_a$ and
\begin{equation}
  z=Pv,\qquad \ell=v^\top Pv=v^\top z.
  \label{eq:z_ell}
\end{equation}
The posterior precision receives a rank-one update, $A^+=A+vv^\top$. By Sherman-Morrison (Appendix~\ref{app:sm-foldin}),
\begin{equation}
  P^+=P-\frac{z z^\top}{1+\ell}.
  \label{eq:sm}
\end{equation}
Since $b^+=b+v$, the updated user vector is
\begin{equation}
  u^+=u+\frac{z(1-u^\top v)}{1+\ell}.
  \label{eq:uupdate}
\end{equation}
Thus each hypothetical transition costs $O(d^2)$ rather than $O(d^3)$. The score change of the selected item has the closed form
\begin{equation}
  \Delta_s(a)=(u^+)^\top v-u^\top v
  =\frac{\ell}{1+\ell}(1-u^\top v).
  \label{eq:delta}
\end{equation}
The leverage $\ell$ controls how much the candidate can move the posterior: two items with similar static scores may have different transition effects.

\subsection{Reward}
For a trajectory $\tau=(a_0,\ldots,a_{K-1})$ and states $s_k=[P_k,u_k]$ produced by the fold-in transition, we evaluate
\begin{equation}
  R(\tau)=\sum_{k=0}^{K-1}\gamma^k\,r(s_k,a_k),\quad
  r(s,a)=\mathrm{sim}(u,v_a)+\eta\frac{1}{1+|\Delta_s(a)|}.
  \label{eq:reward}
\end{equation}
The first term, $\mathrm{sim}(u,v_a)$, is the relevance of item $a$ under the current state, where $\mathrm{sim}$ is a similarity between the user and item embeddings. We consider two instantiations, the raw inner product $\mathrm{dot}(u,v_a)=u^\top v_a$ and the cosine $\mathrm{cos}(u,v_a)=u^\top v_a/(\lVert u\rVert\,\lVert v_a\rVert)$, and we compare them empirically (Section~\ref{sec:validation}, Table~\ref{tab:ablation}). The second term is a dynamic-based alignment score: it is large when folding in the item does not sharply change its own score, i.e., when the current posterior is already geometrically consistent with that item. This is not meant to be a direct model of user satisfaction; it is a planning value defined over fixed MF embeddings.

\section{Planning for Top-$K$ Retrieval}
\label{sec:planning}

\subsection{From a static slate to a trajectory}
A retrieval system normally returns a set or ordered list of $K$ candidates. Static matrix-factorization retrieval scores all candidates from the root state $s_0$ and takes the top $K$. In the MDP view, the same output can be treated as a trajectory:
\begin{equation}
  \tau_K^*=\arg\max_{a_0,\ldots,a_{K-1}}
  \sum_{k=0}^{K-1}\gamma^k r(s_k,a_k),\qquad
  s_{k+1}=T(s_k,a_k).
  \label{eq:traj}
\end{equation}
This objective distinguishes a set of individually good items from a sequence whose early choices move the posterior in a useful direction.

We compare three depths. \Static{} is depth zero: all candidates are scored at $s_0$. \PlanOne{} (see Fig.~\ref{fig:plan1-vs-plank}a) evaluates each candidate after a single fold-in and re-ranks by the one-step relevance/alignment value. \PlanK{} (see Fig.~\ref{fig:plan1-vs-plank}b) runs tree search against the discounted trajectory objective over a longer horizon.

\begin{figure}[t]
  \centering
  \includegraphics[width=\columnwidth]{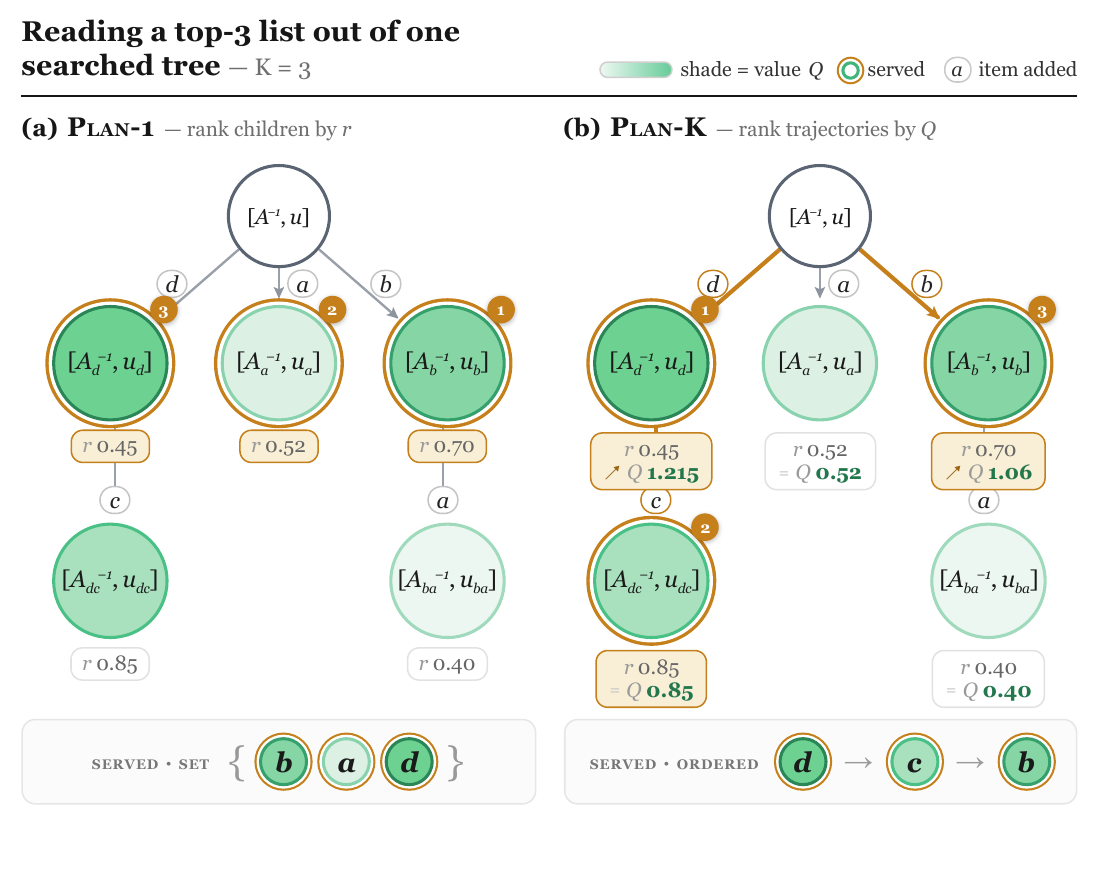}
  \caption{\textbf{Two read-outs of the same searched tree ($K=3$).}
Both panels use the same iALS posterior tree: edges are folded-in items and node shade denotes backed-up value $Q$.
\textsc{Plan-1} ranks root children by immediate score and returns the best items independently, while \textsc{Plan-K} ranks trajectories by discounted return.}
  \Description{Two panels compare Plan-1 and Plan-K read-outs from the same searched tree. Plan-1 selects root children by immediate score, while Plan-K selects backed-up trajectories and can choose an item that unlocks a stronger continuation.}
  \label{fig:plan1-vs-plank}
\end{figure}

\subsection{MCTS in the optimistic environment}
The branching factor is the catalog size, so exact search over \eqref{eq:traj} is infeasible. We use MCTS with two retrieval-specific choices. First, expansion is limited to the $l$ nearest items to the current user vector according to an ANN index; this is the same idea as Wolpertinger-style action reduction in large discrete spaces \citep{dulacarnold2015wolpertinger}. Second, the leaf value is computed in closed form using the fold-in equations rather than by a learned simulator.

At a visited node $s$, selection uses a UCT-like score \citep{kocsis2006uct}
\begin{equation}
  \pi(s,a)=Q(s,a)+c_{\mathrm{uct}}
  \sqrt{\frac{\log N(s)\,[2+\cos(u_s,v_a)]}{N(s,a)+1}}.
  \label{eq:uct}
\end{equation}
The exploration prior uses cosine because it is bounded ($2+\cos\in[1,3]$), independently of the relevance similarity $\mathrm{sim}$ chosen in \eqref{eq:reward}; the corresponding term stays positive and keeps the usual exploration behavior (Appendix~\ref{app:uct-regret}). Backpropagation uses discounted return-to-go, $G\leftarrow r(s_j,a_j)+\gamma G$, so the tree estimates the trajectory reward rather than a frontier re-ranking score (see Fig.~\ref{fig:mcts}).

\begin{figure}[t]
  \centering
  \includegraphics[width=\columnwidth]{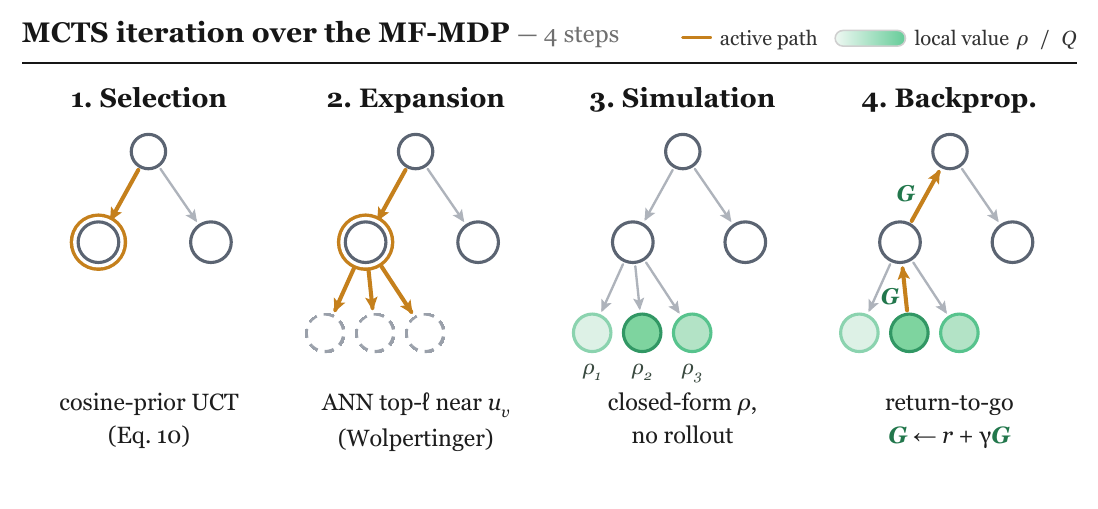}
  \caption{One MCTS iteration over the MF-MDP: selection by cosine-prior UCT, ANN-based expansion, closed-form leaf evaluation, and discounted return-to-go backpropagation.}
  \Description{Four panels show selection, expansion, simulation, and backpropagation in one MCTS iteration over the matrix-factorization MDP.}
  \label{fig:mcts}
\end{figure}

\section{Validation}
\label{sec:validation}

\subsection{Datasets and protocols}
We evaluate the planner on five domains. MovieLens-1M uses movie ratings thresholded at $4$ stars \citep{harper2015movielens}. KuaiRec is a short-video dataset with positive feedback defined by high watch ratio \citep{gao2022kuairec}. VK-LSVD is a large industrial short-video dataset; we use popular-item and popular-user subsamples, \vkip{} and \vkup{} \citep{poslavsky2026vklsvd}. YAMBDA is a music-domain benchmark; in our experiments, a positive event is a listen with played ratio at least $80\%$ \citep{ploshkin2025yambda}. Large VK/YAMBDA catalogs are restricted to a top-5000 popular candidate pool; MovieLens and KuaiRec use their native catalogs.

All methods use the same item factors trained once on the training split of the corresponding protocol. The compared variants differ only in how they construct the top-$K$ list: \Static{} scores from the root state, \PlanOne{} applies the proposed one-step relevance/alignment value after a hypothetical fold-in, and \PlanK{} is the MCTS planner with the same reward. This design keeps representation learning fixed and makes the reported comparison about the evidence added by a planning layer. Because inner-product scores over collaborative-filtering embeddings are entangled with item popularity~\citep{steck2011popularity}, the choice of relevance similarity matters; unless stated otherwise the planners use the cosine instantiation of $\mathrm{sim}$, and we isolate the effect of the dot/cos choice on the strongest planner as an ablation (Table~\ref{tab:ablation}).

We report two protocols. \emph{Leave-last-$n$} (LLN) holds out the last interactions of each user and uses the earlier prefix to build the initial posterior. It tests whether the planner can continue a user's observed trajectory. \emph{Global time split} (GTS) separates train, validation, and test by global timestamp cutoffs, so evaluation interactions occur strictly in the future of the training window. GTS is closer to deployment, but it is also harsher: user activity, item popularity, and catalog composition may drift across the cutoffs.

For both protocols, planner hyperparameters, including $c_{\mathrm{uct}}$, $\gamma$, expansion width, and root readout, are selected on validation only. We report Recall@10 and nDCG@10 on the test split, aggregated by user and seed. Significance is tested against \Static{} using paired Wilcoxon tests.

\begin{table*}[t]
\centering
\caption{Final test results under shared MF embeddings, mean\,$\pm$\,std over $3$ seeds. \Static{} is the conventional inner-product (dot) static retrieval; \PlanOne{} and \PlanK{} use the cosine relevance similarity. Stars mark significant gains versus \Static{} under paired Wilcoxon tests ($^{*}p<0.05$, $^{**}p<0.01$, $^{***}p<0.001$). Boldface marks the best point estimate in each protocol--dataset row for each metric; ties are bolded jointly. The dot/cos similarity choice is isolated in Table~\ref{tab:ablation}.}
\label{tab:main}
\small
\setlength{\tabcolsep}{4.5pt}
\renewcommand{\arraystretch}{1.12}
\begin{tabular}{lcccccc}
\toprule
& \multicolumn{3}{c}{Recall@10} & \multicolumn{3}{c}{nDCG@10} \\
\cmidrule(lr){2-4}\cmidrule(lr){5-7}
Dataset & \Static{} & \PlanOne{} & \PlanK{} & \Static{} & \PlanOne{} & \PlanK{} \\
\midrule
\multicolumn{7}{l}{\textbf{Protocol 1: leave-last-$n$ (LLN)}} \\
ML-1M
& $.0694_{\pm.0032}$ & $.0762_{\pm.0016}$ & $\mathbf{.0764}_{\pm.0015}$
& $.0567_{\pm.0018}$ & $\mathbf{.0612}_{\pm.0017}$ & $.0611_{\pm.0020}$ \\
KuaiRec
& $.0993_{\pm.0105}$ & $.1082_{\pm.0109}$ & $\mathbf{.1107}^{**}_{\pm.0096}$
& $.0712_{\pm.0087}$ & $.0784^{*}_{\pm.0084}$ & $\mathbf{.0800}^{**}_{\pm.0078}$ \\
\vkip{}
& $.0201_{\pm.0033}$ & $\mathbf{.0294}^{***}_{\pm.0053}$ & $.0269^{*}_{\pm.0025}$
& $.0152_{\pm.0023}$ & $\mathbf{.0229}^{***}_{\pm.0033}$ & $.0215^{***}_{\pm.0020}$ \\
\vkup{}
& $.0161_{\pm.0018}$ & $\mathbf{.0260}^{***}_{\pm.0007}$ & $.0248^{***}_{\pm.0024}$
& $.0123_{\pm.0014}$ & $\mathbf{.0192}^{***}_{\pm.0013}$ & $.0187^{***}_{\pm.0024}$ \\
YAMBDA
& $.0138_{\pm.0015}$ & $.0177_{\pm.0035}$ & $\mathbf{.0179}^{*}_{\pm.0036}$
& $.0112_{\pm.0008}$ & $.0147_{\pm.0033}$ & $\mathbf{.0148}_{\pm.0035}$ \\
\midrule
\multicolumn{7}{l}{\textbf{Protocol 2: global time split (GTS)}} \\
ML-1M
& $.0429_{\pm.0028}$ & $\mathbf{.0453}_{\pm.0038}$ & $.0429_{\pm.0037}$
& $.1446_{\pm.0126}$ & $\mathbf{.1552}^{*}_{\pm.0083}$ & $.1484_{\pm.0106}$ \\
KuaiRec
& $\mathbf{.0321}_{\pm.0054}$ & $.0286_{\pm.0079}$ & $.0288_{\pm.0084}$
& $.0324_{\pm.0069}$ & $.0322_{\pm.0079}$ & $\mathbf{.0328}_{\pm.0086}$ \\
\vkip{}
& $.0093_{\pm.0027}$ & $\mathbf{.0108}_{\pm.0031}$ & $\mathbf{.0108}_{\pm.0031}$
& $.0080_{\pm.0014}$ & $\mathbf{.0106}_{\pm.0019}$ & $\mathbf{.0106}_{\pm.0019}$ \\
\vkup{}
& $.0115_{\pm.0024}$ & $\mathbf{.0162}^{**}_{\pm.0048}$ & $\mathbf{.0162}^{**}_{\pm.0048}$
& $.0115_{\pm.0024}$ & $\mathbf{.0160}^{***}_{\pm.0024}$ & $\mathbf{.0160}^{***}_{\pm.0024}$ \\
YAMBDA
& $\mathbf{.0306}_{\pm.0018}$ & $.0256_{\pm.0017}$ & $.0256_{\pm.0017}$
& $.0623_{\pm.0020}$ & $\mathbf{.0628}_{\pm.0016}$ & $\mathbf{.0628}_{\pm.0015}$ \\
\bottomrule
\end{tabular}
\end{table*}

\begin{table}[t]
\centering
\caption{Ablation on the relevance similarity in the most consistent planner \PlanOne{}: inner product (dot) vs cosine (cos), mean\,$\pm$\,std over $3$ seeds. Boldface marks the better variant per row and metric.}
\label{tab:ablation}
\small
\setlength{\tabcolsep}{4.5pt}
\renewcommand{\arraystretch}{1.12}
\begin{tabular}{lcccc}
\toprule
& \multicolumn{2}{c}{Recall@10} & \multicolumn{2}{c}{nDCG@10} \\
\cmidrule(lr){2-3}\cmidrule(lr){4-5}
Dataset & \PlanOne{}(dot) & \PlanOne{}(cos) & \PlanOne{}(dot) & \PlanOne{}(cos) \\
\midrule
\multicolumn{5}{l}{\textbf{Protocol 1: leave-last-$n$ (LLN)}} \\
ML-1M
& $.0630_{\pm.0013}$ & $\mathbf{.0762}_{\pm.0016}$
& $.0513_{\pm.0019}$ & $\mathbf{.0612}_{\pm.0017}$ \\
KuaiRec
& $.0910_{\pm.0117}$ & $\mathbf{.1082}_{\pm.0109}$
& $.0650_{\pm.0090}$ & $\mathbf{.0784}_{\pm.0084}$ \\
\vkip{}
& $.0146_{\pm.0022}$ & $\mathbf{.0294}_{\pm.0053}$
& $.0104_{\pm.0016}$ & $\mathbf{.0229}_{\pm.0033}$ \\
\vkup{}
& $.0127_{\pm.0010}$ & $\mathbf{.0260}_{\pm.0007}$
& $.0095_{\pm.0006}$ & $\mathbf{.0192}_{\pm.0013}$ \\
YAMBDA
& $.0132_{\pm.0011}$ & $\mathbf{.0177}_{\pm.0035}$
& $.0106_{\pm.0003}$ & $\mathbf{.0147}_{\pm.0033}$ \\
\midrule
\multicolumn{5}{l}{\textbf{Protocol 2: global time split (GTS)}} \\
ML-1M
& $.0436_{\pm.0045}$ & $\mathbf{.0453}_{\pm.0038}$
& $.1345_{\pm.0109}$ & $\mathbf{.1552}_{\pm.0083}$ \\
KuaiRec
& $\mathbf{.0300}_{\pm.0046}$ & $.0286_{\pm.0079}$
& $.0300_{\pm.0063}$ & $\mathbf{.0322}_{\pm.0079}$ \\
\vkip{}
& $.0089_{\pm.0014}$ & $\mathbf{.0108}_{\pm.0031}$
& $.0081_{\pm.0010}$ & $\mathbf{.0106}_{\pm.0019}$ \\
\vkup{}
& $.0096_{\pm.0014}$ & $\mathbf{.0162}_{\pm.0048}$
& $.0105_{\pm.0016}$ & $\mathbf{.0160}_{\pm.0024}$ \\
YAMBDA
& $\mathbf{.0301}_{\pm.0022}$ & $.0256_{\pm.0017}$
& $.0609_{\pm.0013}$ & $\mathbf{.0628}_{\pm.0016}$ \\
\bottomrule
\end{tabular}
\end{table}

\begin{figure*}[t]
\centering
\includegraphics[width=0.92\textwidth]{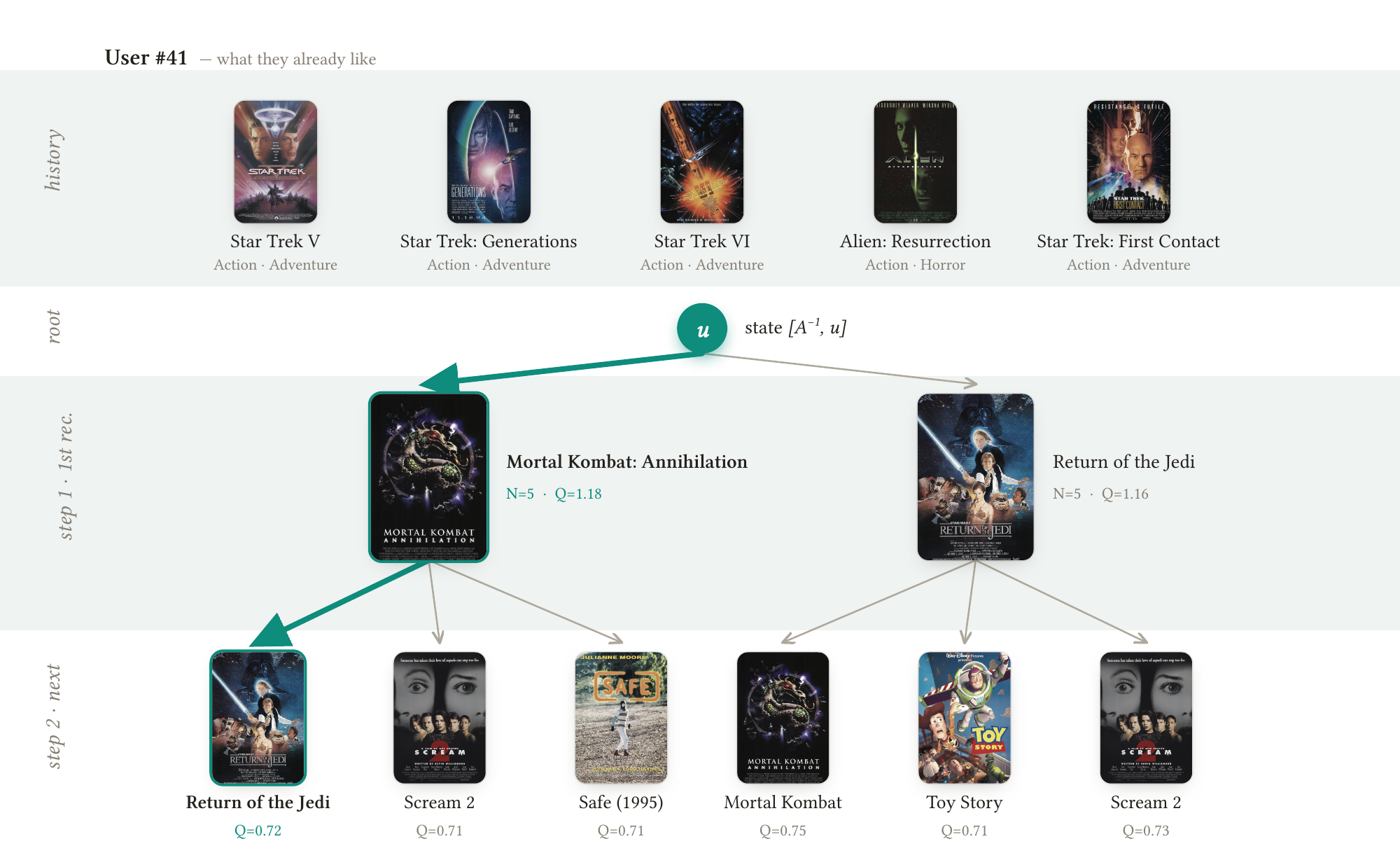}
\caption{A MovieLens example of the tree-search planner. The top row shows a user's consumed history; the root is the posterior state $(A^{-1},u)$; the first expanded level contains candidate first recommendations with visit/value annotations; the second level shows the planned continuation. The highlighted path is the selected trajectory.}
\Description{A recommendation tree for one movie user, with a highlighted selected trajectory and other explored branches.}
\label{fig:example}
\end{figure*}

\subsection{Results}
Table~\ref{tab:main} gives the main comparison under cosine relevance. The LLN panel is where the MDP formulation shows the clearest signal: \PlanOne{} improves Recall@10 over \Static{} on all five datasets, significantly so on both VK-LSVD slices, while \PlanK{} attains the best recall on MovieLens-1M, KuaiRec and YAMBDA and is significant on KuaiRec (Recall and nDCG) and YAMBDA (Recall). The two VK-LSVD slices behave the same way: one-step planning is already the best variant and deeper search adds nothing useful on top of it.

Under the GTS protocol, the one-step value remains significant on both VK-LSVD subsamples and on MovieLens-1M nDCG, but \Static{} takes the best recall on KuaiRec and YAMBDA, where planning loses outright. We read this as temporal drift the optimistic fold-in does not model: when item popularity and catalog composition move between the training window and the test window, an accepted-and-folded-in item is a worse description of the next state than it is under LLN.

The relevance similarity is what makes any of this work, so we isolate it on the most consistent planner, \PlanOne{} (Table~\ref{tab:ablation}). The dot-product instantiation amplifies popular items along the trajectory; the cosine one drops the norm and keeps the relevance term popularity-robust, and it is cosine that produces every gain reported above. The only cells where dot is not the worse choice are the two GTS feed domains (KuaiRec, YAMBDA), and there neither similarity beats static and popularity itself has already drifted.

Deep MCTS earns its cost only sometimes. \PlanK{} beats \PlanOne{} under LLN on KuaiRec and gives marginal recall gains on MovieLens-1M and YAMBDA, but it does not move the GTS numbers and trails the one-step method on the VK slices. So the order we would actually deploy is: \PlanOne{} under cosine relevance first, and \PlanK{} only where validation shows sequential structure worth searching for.

\section{Conclusion and Future Work}
We introduced a simple MDP on top of implicit-ALS embeddings. The user state is the posterior pair $[A^{-1},u]$, the transition is defined by a rank-one update in a closed form, and the reward combines current relevance with posterior alignment. Because the item representation is kept fixed, the experiments provide evidence about the incremental value of adding a planning layer to the same MF retrieval geometry, rather than about a new representation model.

Under leave-last-$n$, dynamics-aware planning outperforms static retrieval on all five datasets. Under the global time split the gain remains on MovieLens-1M and the VK-LSVD slices but disappears on KuaiRec and YAMBDA. So the conclusion is not that MCTS should replace retrieval everywhere. A single dynamics-aware fold-in is often a cheap and useful improvement over a static top-$K$ list; deeper search pays off only where validation justifies it, and a temporal split is what tells you whether the gain survives future-period drift in the first place.

A second finding is about the relevance term. The gain depends on cosine similarity: with the raw inner product, planning typically matches or significantly underperforms static retrieval, because inner-product scores amplify item popularity. Cosine is a prerequisite for the effect, not a tuning detail.

The present environment is deliberately optimistic and deterministic: the planner assumes that a selected item is accepted and folded into the positive history. A natural next step is to learn stochastic dynamics and values from logs, including rejection, session termination, and delayed satisfaction. Another direction is to distill the planner into a fast retrieval policy \cite{schrittwieser2020mastering} or to use a sequence model as a value function inside the MF-MDP, connecting this lightweight construction to broader model-based RL for recommendation.

\bibliographystyle{ACM-Reference-Format}
\bibliography{references}

@inproceedings{hu2008implicit,
  title={Collaborative Filtering for Implicit Feedback Datasets},
  author={Hu, Yifan and Koren, Yehuda and Volinsky, Chris},
  booktitle={2008 Eighth IEEE international conference on data mining},
  pages={263--272},
  year={2008},
  organization={Ieee}
}

@article{koren2009mf,
  title={Matrix Factorization Techniques for Recommender Systems},
  author={Koren, Yehuda and Bell, Robert and Volinsky, Chris},
  journal={Computer},
  volume={42},
  number={8},
  pages={30--37},
  year={2009},
  publisher={IEEE}
}

@article{hidasi2016session,
  title={Session-based Recommendations with Recurrent Neural Networks},
  author={Hidasi, Bal{\'a}zs and Karatzoglou, Alexandros and Baltrunas, Linas and Tikk, Domonkos},
  journal={arXiv preprint arXiv:1511.06939},
  year={2015}
}

@inproceedings{kang2018sasrec,
  title={Self-Attentive Sequential Recommendation},
  author={Kang, Wang-Cheng and McAuley, Julian},
  booktitle={2018 IEEE international conference on data mining (ICDM)},
  pages={197--206},
  year={2018},
  organization={IEEE}
}

@inproceedings{sun2019bert4rec,
  title={BERT4Rec: Sequential Recommendation with Bidirectional Encoder Representations from Transformer},
  author={Sun, Fei and Liu, Jun and Wu, Jian and Pei, Changhua and Lin, Xiao and Ou, Wenwu and Jiang, Peng},
  booktitle={Proceedings of the 28th ACM international conference on information and knowledge management},
  pages={1441--1450},
  year={2019}
}

@article{shani2005mdp,
  title={An MDP-based Recommender System},
  author={Shani, Guy and Heckerman, David and Brafman, Ronen I},
  journal={Journal of machine Learning research},
  volume={6},
  number={Sep},
  pages={1265--1295},
  year={2005}
}

@article{ie2019slateq,
  title={Reinforcement Learning for Slate-based Recommender Systems: A Tractable Decomposition and Practical Methodology},
  author={Ie, Eugene and Jain, Vihan and Wang, Jing and Narvekar, Sanmit and Agarwal, Ritesh and Wu, Rui and Cheng, Heng-Tze and Lustman, Morgane and Gatto, Vince and Covington, Paul and others},
  journal={arXiv preprint arXiv:1905.12767},
  year={2019}
}

@article{dulacarnold2015wolpertinger,
  title={Deep Reinforcement Learning in Large Discrete Action Spaces},
  author={Dulac-Arnold, Gabriel and Evans, Richard and van Hasselt, Hado and Sunehag, Peter and Lillicrap, Timothy and Hunt, Jonathan and Mann, Timothy and Weber, Theophane and Degris, Thomas and Coppin, Ben},
  journal={arXiv preprint arXiv:1512.07679},
  year={2015}
}

@inproceedings{kocsis2006uct,
  title={Bandit Based Monte-Carlo Planning},
  author={Kocsis, Levente and Szepesv{\'a}ri, Csaba},
  booktitle={European conference on machine learning},
  pages={282--293},
  year={2006},
  organization={Springer}
}

@inproceedings{steck2011popularity,
  title={Item Popularity and Recommendation Accuracy},
  author={Steck, Harald},
  booktitle={Proceedings of the fifth ACM conference on Recommender systems},
  pages={125--132},
  year={2011}
}

@article{harper2015movielens,
  title={The MovieLens Datasets: History and Context},
  author={Harper, F Maxwell and Konstan, Joseph A},
  journal={Acm transactions on interactive intelligent systems (tiis)},
  volume={5},
  number={4},
  pages={1--19},
  year={2015},
  publisher={Acm New York, NY, USA}
}

@inproceedings{gao2022kuairec,
  title={KuaiRec: A Fully-observed Dataset and Insights for Evaluating Recommender Systems},
  author={Gao, Chongming and Li, Shijun and Lei, Wenqiang and Chen, Jiawei and Li, Biao and Jiang, Peng and He, Xiangnan and Mao, Jiaxin and Chua, Tat-Seng},
  booktitle={Proceedings of the 31st ACM International Conference on Information \& Knowledge Management},
  pages={540--550},
  year={2022}
}

@inproceedings{poslavsky2026vklsvd,
  title={VK-LSVD: A Large-Scale Industrial Dataset for Short-Video Recommendation},
  author={Poslavsky, Aleksandr and D'yakonov, Alexander and Dorn, Yuriy and Zimovnov, Andrey},
  booktitle={Proceedings of the ACM Web Conference 2026},
  pages={8657--8660},
  year={2026},
  series={WWW '26}
}

@inproceedings{ploshkin2025yambda,
  title={Yambda-5B—A Large-Scale Multi-Modal Dataset for Ranking and Retrieval},
  author={Ploshkin, Alexander and Tytskiy, Vladislav and Pismenny, Alexey and Baikalov, Vladimir and Taychinov, Evgeny and Permiakov, Artem and Burlakov, Daniil and Krofto, Eugene},
  booktitle={Proceedings of the Nineteenth ACM Conference on Recommender Systems},
  pages={894--901},
  year={2025}
}

@inproceedings{rendle2022revisiting,
  title={Revisiting the Performance of iALS on Item Recommendation Benchmarks},
  author={Rendle, Steffen and Krichene, Walid and Zhang, Li and Koren, Yehuda},
  booktitle={Proceedings of the 16th ACM Conference on Recommender Systems},
  pages={427--435},
  year={2022}
}

@article{schrittwieser2020mastering,
  title={Mastering Atari, Go, Chess and Shogi by Planning with a Learned Model},
  author={Schrittwieser, Julian and Antonoglou, Ioannis and Hubert, Thomas and Simonyan, Karen and Sifre, Laurent and Schmitt, Simon and Guez, Arthur and Lockhart, Edward and Hassabis, Demis and Graepel, Thore and others},
  journal={Nature},
  volume={588},
  number={7839},
  pages={604--609},
  year={2020},
  publisher={Nature Publishing Group UK London}
}

@article{sherman1950adjustment,
  title={Adjustment of an Inverse Matrix Corresponding to a Change in One Element of a Given Matrix},
  author={Sherman, Jack and Morrison, Winifred J},
  journal={The Annals of Mathematical Statistics},
  volume={21},
  number={1},
  pages={124--127},
  year={1950},
  publisher={JSTOR}
}

@article{lai1985asymptotically,
  title={Asymptotically efficient adaptive allocation rules},
  author={Lai, Tze Leung and Robbins, Herbert},
  journal={Advances in applied mathematics},
  volume={6},
  number={1},
  pages={4--22},
  year={1985},
  publisher={Academic Press, Inc. Orlando, FL, USA}
}

@inproceedings{li2010contextual,
  title={A Contextual-Bandit Approach to Personalized News Article Recommendation},
  author={Li, Lihong and Chu, Wei and Langford, John and Schapire, Robert E},
  booktitle={Proceedings of the 19th international conference on World wide web},
  pages={661--670},
  year={2010}
}

\appendix
\section{ALS Fold-in Update}
\label{app:sm-foldin}
This appendix spells out the rank-one algebra used by the optimistic ALS transition.
All vectors are column vectors.

\begin{prop}[Sherman--Morrison identity~\cite{sherman1950adjustment}]\label{prop:sm}
Let $A\in\mathbb{R}^{d\times d}$ be invertible and let $x,y\in\mathbb{R}^d$ satisfy
$1+y^\top A^{-1}x\ne 0$. Then
\begin{equation}
  (A+xy^\top)^{-1}
  =A^{-1}-\frac{A^{-1}xy^\top A^{-1}}{1+y^\top A^{-1}x}.
  \label{eq:app-sm-identity}
\end{equation}
\end{prop}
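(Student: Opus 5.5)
The plan is to verify the claimed inverse directly. Write $B=A^{-1}$ and $\beta=1+y^\top Bx$, which is a nonzero scalar by hypothesis. Let $M$ denote the right-hand side of \eqref{eq:app-sm-identity}, so $M=B-\beta^{-1}Bxy^\top B$. Since $A+xy^\top$ is square, it suffices to show the one-sided product $(A+xy^\top)M=I$. In finite dimensions a right inverse is automatically a two-sided inverse, and this step also shows $A+xy^\top$ is invertible, so I need not assume that separately.

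The key step is to expand the product into four terms:
\[
(A+xy^\top)M = AB + xy^\top B - \beta^{-1}ABxy^\top B - \beta^{-1}x\,(y^\top Bx)\,y^\top B .
\]
Using $AB=I$, the third term becomes $-\beta^{-1}xy^\top B$. In the fourth term the middle factor $y^\top Bx$ is a scalar, so it can be pulled out, giving $-\beta^{-1}(\beta-1)\,xy^\top B$. The three rank-one terms share the factor $xy^\top B$, with total coefficient
\[
1-\beta^{-1}-\beta^{-1}(\beta-1)=0,
\]
so the product collapses to $I$.

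There is no real obstacle here. The only point needing care is to recognize $y^\top A^{-1}x$ as a $1\times1$ scalar that commutes past the outer products. Along the way I will note that the hypothesis $\beta\ne 0$ is exactly what makes $M$ well defined.

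For completeness I could also remark on the converse: if $\beta=0$, then $(A+xy^\top)A^{-1}x=x\beta=0$, with $A^{-1}x\ne0$ (nonzero since $\beta=0$ forces $x\ne0$), so the matrix is singular. The proposition does not require this, though.

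Finally, I will specialize to the paper's use: $x=y=v$ with $P=A^{-1}$ symmetric, which recovers \eqref{eq:sm} with $z=Pv$ and $\ell=v^\top z\ge0$. There $1+\ell>0$ automatically because $P$ is positive definite, as $A=\lambda I+\sum_i c_i v_iv_i^\top$ with $\lambda>0$. Then \eqref{eq:uupdate} follows by multiplying $P^+$ into $b+v$.
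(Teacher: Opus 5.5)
The paper gives no proof of this proposition. It states the identity with a citation to Sherman and Morrison and uses it as a black box, with $x=y=v$, in the proof of Proposition~\ref{prop:foldin}. Your direct verification is correct and is the standard argument. Expanding $(A+xy^\top)M$ and treating $y^\top A^{-1}x=\beta-1$ as a scalar leaves the coefficient $1-\beta^{-1}-\beta^{-1}(\beta-1)=0$ on $xy^\top A^{-1}$. A one-sided inverse of a square matrix is two-sided, for instance from $\det(A+xy^\top)\det M=1$, so you also prove that $A+xy^\top$ is invertible, which the statement only tacitly asserts.

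Compared with the bare citation, your version makes the appendix self-contained, and your two extra remarks are worth keeping. The converse shows that the hypothesis $\beta\ne 0$ is sharp. Your observation that $\ell=v^\top Pv\ge 0$ forces $1+\ell>0$ supplies the hypothesis check that the paper's proof of Proposition~\ref{prop:foldin} skips; that proof only notes that $A$ is invertible before applying the identity. Positive definiteness of $A$ there also needs nonnegative confidences $c_i$, which holds for $c_i=1+\alpha r_i$ and which the paper likewise leaves implicit.
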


Following \cref{eq:u_H,eq:ials,eq:s_H_P_H,eq:z_ell,eq:sm,eq:uupdate}, consider a current ALS fold-in state
\begin{equation}
  A=\lambda I+\sum_{i\in \I} c_i v_i v_i^\top,
  \quad
  b=\sum_{i\in H} c_i\, v_i,
  \quad
  P=A^{-1},
  \quad
  u=Pb .
\end{equation}
Suppose the planner chooses item $a$ with embedding $v=v_a$ and, in the optimistic
planning model, adds it to the positive history. We fold it in as a unit-confidence
positive interaction ($p=1$, $c_a=1$), so the rank-one update keeps its simple form:
\begin{equation}
  A^+=A+vv^\top,
  \quad
  b^+=b+v,
  \quad
  P^+=(A^+)^{-1},
  \quad
  u^+=P^+b^+ .
\end{equation}
Let
\begin{equation}
  z=Pv,
  \qquad
  \ell=v^\top Pv=v^\top z .
\end{equation}

\begin{prop}[closed-form ALS planning update]\label{prop:foldin}
The inverse precision and user vector after the hypothetical interaction are
\begin{align}
  P^+ &= P-\frac{zz^\top}{1+\ell}, \label{eq:app-p-update}\\
  u^+ &= u+\frac{z(1-u^\top v)}{1+\ell}. \label{eq:app-u-update}
\end{align}
Consequently, the self-score shift of the selected item is
\begin{equation}
  \Delta_s(a)=(u^+)^\top v-u^\top v
  =\frac{\ell}{1+\ell}(1-u^\top v).
  \label{eq:app-score-shift}
\end{equation}
\end{prop}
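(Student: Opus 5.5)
The plan is to apply \cref{prop:sm} with $x=y=v$ and then derive the $u$ and score updates by direct algebra. The main facts are that $P$ is symmetric and that $v^\top z=\ell$.

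\textbf{Precision update.} Apply \cref{prop:sm} to $A^+=A+vv^\top$ with $x=y=v$.
\begin{itemize}
  \item The hypothesis $1+v^\top A^{-1}v\ne 0$ must be checked. Since $\lambda>0$ and each $c_i v_iv_i^\top\succeq 0$, the matrix $A$ is symmetric positive definite.
  \item Hence $P=A^{-1}$ is also symmetric positive definite, so $\ell=v^\top Pv\ge 0$ and $1+\ell\ge 1$.
  \item Sherman--Morrison then gives $P^+=P-Pvv^\top P/(1+\ell)$.
  \item Symmetry of $P$ gives $v^\top P=(Pv)^\top=z^\top$, so the correction term is $zz^\top/(1+\ell)$. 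This is \eqref{eq:app-p-update}.
\end{itemize}

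\textbf{User-vector update.} Expand $u^+=P^+b^+$ using $b^+=b+v$:
\[
u^+=\Bigl(P-\tfrac{zz^\top}{1+\ell}\Bigr)(b+v)
= Pb+Pv-\tfrac{z\,(z^\top b)}{1+\ell}-\tfrac{z\,(z^\top v)}{1+\ell}.
\]
Now identify each term:
\begin{itemize}
  \item $Pb=u$ and $Pv=z$.
  \item $z^\top v=\ell$.
  \item $z^\top b=v^\top Pb=v^\top u=u^\top v$, again by symmetry of $P$.
\end{itemize}
Collecting the three multiples of $z$ gives the coefficient
\[
1-\frac{u^\top v+\ell}{1+\ell}=\frac{1-u^\top v}{1+\ell},
\]
which is \eqref{eq:app-u-update}.

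\textbf{Score shift.} Take the inner product of \eqref{eq:app-u-update} with $v$. This gives
\[
(u^+)^\top v-u^\top v=\frac{(z^\top v)(1-u^\top v)}{1+\ell}=\frac{\ell}{1+\ell}\,(1-u^\top v),
\]
which is \eqref{eq:app-score-shift}.

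There is no real obstacle. The one point needing care is justifying that $P$ is symmetric and that $1+\ell\neq 0$. Both follow from $\lambda>0$, $c_i\ge 0$, and the positive definiteness of $A$, and this positivity argument should be stated explicitly. The rest is bookkeeping in which $z^\top b$ must be recognised as $u^\top v$.
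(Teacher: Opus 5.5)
Your proposal is correct and follows the paper's proof essentially step for step: Sherman--Morrison with $x=y=v$, expansion of $u^+=P^+(b+v)$ using the identification $v^\top Pb=u^\top v$, and multiplication by $v^\top$ for the score shift. Your explicit check that $1+\ell\ge 1$, via positive definiteness of $P$, is a small improvement. The paper only notes that $A$ is invertible and leaves the Sherman--Morrison nondegeneracy condition implicit.
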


\begin{proof}
Since $\lambda>0$, the matrix $A$ is positive definite and therefore invertible.
The optimistic fold-in adds one positive interaction with item vector $v$, so the
precision matrix receives a rank-one update, $A^+=A+vv^\top$. Applying
\cref{prop:sm} with $x=y=v$ gives
\begin{equation}
  P^+=(A+vv^\top)^{-1}=P-\frac{Pvv^\top P}{1+v^\top Pv}.
\end{equation}
Because $z=Pv$ and $z^\top=v^\top P^\top = v^\top P$, this is exactly \cref{eq:app-p-update}.

For the user vector, expand $u^+=P^+b^+=P^+(b+v)$ and use $u=Pb$, $z=Pv$,
$\ell=v^\top Pv$, together with $v^\top Pb=v^\top u=u^\top v$:
\begin{align}
  u^+
  &= Pb + Pv
     - \frac{Pv\,v^\top Pb}{1+v^\top Pv}
     - \frac{Pv\,v^\top Pv}{1+v^\top Pv} \\
  &= u + z
     - \frac{z\,v^\top u}{1+\ell}
     - \frac{z\,\ell}{1+\ell} = u + z
     - \frac{z\,u^\top v}{1+\ell}
     - \frac{z\,\ell}{1+\ell} \\
  &= u + z\left(1-\frac{u^\top v+\ell}{1+\ell}\right)
   = u + z\,\frac{1-u^\top v}{1+\ell},
\end{align}
which is \cref{eq:app-u-update}. Finally, multiplying $u^+-u$ by $v^\top$ and using
$v^\top z=\ell$,
\begin{equation}
  (u^+)^\top v-u^\top v
  = v^\top(u^+-u)
  = v^\top z\,\frac{1-u^\top v}{1+\ell}
  = \frac{\ell}{1+\ell}(1-u^\top v),
\end{equation}
which proves \cref{eq:app-score-shift}. The computation requires one matrix-vector
product $Pv$ and one rank-one matrix update, so a hypothetical transition costs
$O(d^2)$ and avoids a fresh $O(d^3)$ inversion.
\end{proof}

\section{Cosine-Prior UCT and Node-wise Regret}
\label{app:uct-regret}
The argument below is local to one expanded MCTS node. It should be read as the
standard finite-armed UCB abstraction of the tree-policy step, not as a regret
theorem for the full recommender system or for real user feedback.

\begin{prop}[cosine-prior UCT keeps logarithmic node-wise regret]\label{prop:uct}
Fix a node $s$ with a finite expanded child set $\mathcal{A}_s$. Assume the
normalized return samples observed after choosing child $a\in\mathcal{A}_s$ are
independent, lie in $[0,1]$, and have mean $\mu_a$. Let $\widehat\mu_a(t)$ be the
empirical mean after $N_a(t)$ visits, let $a^\star\in\arg\max_a\mu_a$, and define
$\Delta_a=\mu_{a^\star}-\mu_a$ for each suboptimal child. Consider the tree policy
\begin{equation}
  a_t\in\arg\max_{a\in\mathcal{A}_s}
  \widehat\mu_a(t)
  + c_{\mathrm{uct}}
  \sqrt{\frac{\alpha_a\log t}{N_a(t)}},
  \qquad
  \alpha_a = 2+\cos(u_s,v_a),
  \label{eq:app-cos-uct}
\end{equation}
with $c_{\mathrm{uct}}\ge 1$, after each child has been visited once. Then the
node-wise regret retains its logarithmic order.
\end{prop}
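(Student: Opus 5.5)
The plan is to reduce \cref{prop:uct} to the classical UCB1 analysis (Auer--Cesa-Bianchi--Fischer style), treating the cosine prior as a child-dependent but \emph{bounded} rescaling of the confidence radius. The key observation is that $\alpha_a=2+\cos(u_s,v_a)\in[1,3]$. Within a single node $s$ the state $u_s$ is fixed, so each $\alpha_a$ is a constant that does not depend on $t$. Hence the index in \eqref{eq:app-cos-uct} has radius $c_a\sqrt{\log t/N_a(t)}$ with $c_a=c_{\mathrm{uct}}\sqrt{\alpha_a}$, and
\begin{equation*}
  1\le c_{\mathrm{uct}}\le c_a\le \sqrt3\,c_{\mathrm{uct}} .
\end{equation*}
I will write the regret as $\sum_{a\ne a^\star}\Delta_a\,\mathbb{E}[N_a(n)]$ and bound each $\mathbb{E}[N_a(n)]$ by $O(\log n/\Delta_a^2)$.

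\textbf{Step 1 (decomposition).} Fix a suboptimal $a$ and set
\begin{equation*}
  m_a=\left\lceil \frac{4c_a^2\log n}{\Delta_a^2}\right\rceil .
\end{equation*}
If $a_t=a$ and $N_a(t)\ge m_a$, then one of three events must occur:
\begin{enumerate}
\item[(i)] $\widehat\mu_{a^\star}+c_{a^\star}\sqrt{\log t/N_{a^\star}}\le\mu_{a^\star}$, i.e.\ the optimal arm is underestimated;
\item[(ii)] $\widehat\mu_a\ge\mu_a+c_a\sqrt{\log t/N_a}$, i.e.\ arm $a$ is overestimated;
\item[(iii)] $\mu_{a^\star}<\mu_a+2c_a\sqrt{\log t/N_a}$.
\end{enumerate}
Event (iii) is impossible once $N_a\ge m_a$. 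This gives
\begin{equation*}
  \mathbb{E}[N_a(n)]\le m_a+\sum_{t}\sum_{s,s'}\bigl(\Pr[\text{(i)}]+\Pr[\text{(ii)}]\bigr).
\end{equation*}

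\textbf{Step 2 (concentration).} Since the samples are independent and lie in $[0,1]$, Hoeffding gives
\begin{equation*}
  \Pr[\text{(i)}]\le t^{-2c_{a^\star}^2},\qquad \Pr[\text{(ii)}]\le t^{-2c_a^2}.
\end{equation*}
Because $c_{\mathrm{uct}}\ge1$ and $\alpha\ge1$, both exponents are at least $2$. The double sum over $s,s'\le t$ then yields $\sum_t 2t^2\cdot t^{-2\cdot 2}$ at worst, which is $\sum_t 2t^{-2}\le\pi^2/3$.

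Putting the pieces together,
\begin{equation*}
  \mathbb{E}[N_a(n)]\le \frac{12\,c_{\mathrm{uct}}^2\log n}{\Delta_a^2}+1+\frac{\pi^2}{3},
\end{equation*}
and therefore
\begin{equation*}
  \text{Regret}_n\le\sum_{a\ne a^\star}\left(\frac{12\,c_{\mathrm{uct}}^2\log n}{\Delta_a}+\Bigl(1+\tfrac{\pi^2}{3}\Bigr)\Delta_a\right).
\end{equation*}
This is logarithmic, of the Lai--Robbins order~\cite{lai1985asymptotically}, and differs from standard UCB1 only by the constant factor $3$ coming from $\max\alpha_a$.

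\textbf{Main obstacle.} The delicate point is making sure the lower bound $\alpha_a\ge1$ is really what keeps the Hoeffding tail summable. If $\cos$ were unbounded below, or if the prior could vanish, the exponent $2c_a^2$ could drop to $2$ or below and the union over $s,s'$ would break. So I will state explicitly that $\alpha_a\ge1$ together with $c_{\mathrm{uct}}\ge1$ gives exponent at least $2$, while $\alpha_a\le3$ only inflates $m_a$.

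A secondary caveat is that the paper's \eqref{eq:uct} uses $N(s,a)+1$ and $\log N(s)$ rather than $N_a$ and $\log t$. I will note that after the initial visit of every child these differ by at most a factor of $2$ inside the square root, which can be absorbed into the constant. The i.i.d.\ assumption that ignores subtree nonstationarity is taken as given by the statement itself.
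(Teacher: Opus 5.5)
Your overall route is the same as the paper's: bound $\alpha_a\in[1,3]$, get per-sample Hoeffding tails $t^{-2c_{\mathrm{uct}}^2\alpha_a}\le t^{-2}$, and feed them into the standard UCB analysis. Your Step 1 decomposition is correct, with child-dependent radii $c_a=c_{\mathrm{uct}}\sqrt{\alpha_a}$ and $m_a=\lceil 4c_a^2\log n/\Delta_a^2\rceil$.

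The gap is in the counting of Step 2. The hypotheses $c_{\mathrm{uct}}\ge1$ and $\alpha_a\ge1$ only give exponent $2c_a^2\ge 2$. That is a failure probability of $t^{-2}$ per pair $(s,s')$, not $t^{-4}$. Your ``$t^{-2\cdot 2}$'' silently assumes $c_{\mathrm{uct}}^2\alpha_a\ge 2$, which is UCB1's $\sqrt{2\log t/n}$ constant. The case $c_{\mathrm{uct}}=1$, $\cos(u_s,v_a)=-1$ (so $\alpha_a=1$) is allowed by the statement. In that case the double union over $s,s'\le t$ gives $\sum_{t\le n}2t^2\cdot t^{-2}=2n$, so your bound on $\mathbb{E}[N_a(n)]$ is linear and proves nothing. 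Your own ``Main obstacle'' paragraph says exponent $2$ breaks the union over $s,s'$, and exponent $2$ is exactly what the assumptions deliver.

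The conclusion is still reachable under $c_{\mathrm{uct}}\ge1$, but you need a tighter count. Bound each bad event at time $t$ by a union over a \emph{single} sample count:
\[
\Pr[\text{(i) at }t]\le\sum_{s=1}^{t}t^{-2c_{a^\star}^2}\le t^{-1},
\]
and likewise for (ii) over the values of $N_a$. Summing gives $2\sum_{t\le n}t^{-1}\le 2(1+\log n)$, hence $\mathbb{E}[N_a(n)]\le 12c_{\mathrm{uct}}^2\log n/\Delta_a^2+2\log n+3$. This is still logarithmic; the additive term per suboptimal child becomes $O(\Delta_a\log n)$ instead of the constant $(1+\pi^2/3)\Delta_a$. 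Alternatively, strengthen the hypothesis to $c_{\mathrm{uct}}\ge\sqrt2$, which makes your $1+\pi^2/3$ constant correct as written.

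The paper's own proof stops at the per-sample $t^{-2}$ tail and calls it ``summable''. That suffices only with the single-union count, not with the double union you chose.

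The same issue undermines your secondary caveat. Replacing $N_a$ by $N_a+1$ is not a constant you can freely absorb: it halves the effective $c_a^2$, and hence the Hoeffding exponent, which is the one quantity the argument depends on.
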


\begin{proof}
First we bound the cosine multiplier. For nonzero vectors $u_s,v_a$,
\begin{equation}
  \cos(u_s,v_a)=\frac{u_s^\top v_a}{\lVert u_s\rVert_2\lVert v_a\rVert_2},
\end{equation}
and by Cauchy--Schwarz $|u_s^\top v_a|\le\lVert u_s\rVert_2\lVert v_a\rVert_2$, hence
$-1\le\cos(u_s,v_a)\le 1$ and
\begin{equation}
  1\le \alpha_a=2+\cos(u_s,v_a)\le 3 .
  \label{eq:app-alpha-range}
\end{equation}
This is the only property of the cosine prior used: it is deterministic at node $s$,
positive, and uniformly bounded. If an implementation encounters a zero vector and
sets the cosine to $0$, the bound still holds.

Define the exploration radius $b_a(t,n)=c_{\mathrm{uct}}\sqrt{\alpha_a\log t/n}$. The
Chernoff--Hoeffding inequality for variables in $[0,1]$ gives, for $n=N_a(t)$,
\begin{align}
  \Pr\{\mu_a\ge \widehat\mu_a(t)+b_a(t,n)\}
  &\le \exp\{-2n\, b_a(t,n)^2\} \\
  &= \exp\{-2c_{\mathrm{uct}}^2\alpha_a\log t\}
   = t^{-2c_{\mathrm{uct}}^2\alpha_a}
  \le t^{-2},
  \label{eq:app-hoeffding-lower}
\end{align}
the last step by $c_{\mathrm{uct}}\ge1$ and $\alpha_a\ge1$. The symmetric
upper-deviation event obeys the same bound,
\begin{equation}
  \Pr\{\widehat\mu_a(t)\ge \mu_a+b_a(t,n)\}
  \le t^{-2c_{\mathrm{uct}}^2\alpha_a}
  \le t^{-2}.
  \label{eq:app-hoeffding-upper}
\end{equation}
Thus the cosine-prior bonus preserves the summable confidence-failure probabilities
of the standard UCB argument~\cite{lai1985asymptotically}, so node-wise regret keeps
its logarithmic order.
\end{proof}

\end{document}